\theoremstyle{plain}
\newtheorem{fram}{Framework}
\newtheorem{lemma}{Lemma}
\newtheorem{theo}{Theorem}[section]
\newtheorem{rem}[theo]{Remark}
\newtheorem{conjecture}{Conjecture}
\newtheorem*{reminderInternal}{Reminder: \reminderCurrent}
\newenvironment{reminder}[1]{\def\reminderCurrent{#1}\begin{reminderInternal}}{\end{reminderInternal}}
\theoremstyle{definition}
\newcommand{\poly}{\operatorname*{poly}}
\newcommand{\PTIME}{\mathsf{P}}
\newcommand{\polylog}{\operatorname*{polylog}}
\renewcommand{\epsilon}{\varepsilon}
\def\ShowAuthNotes{1}
\newcommand{\authnote}[2]{\ \\ \textcolor{red}{\parbox{0.9\linewidth}{[{\footnotesize {\bf #1:} { {#2}}}]}}\newline}
\newcommand{\authnote}[2]{}
\newcommand{\entry}{\mathcal{E}}
\newcommand{\partver}{\mathsf{part}}
\let\svfootnoterule\footnoterule
\renewcommand\footnoterule{\vfill\svfootnoterule}
\title{Nearly Optimal Separation Between Partially And Fully Retroactive Data Structures}
\date{}
\author[1]{Lijie Chen\thanks{Supported by an Akamai Fellowship.}}
\author[1]{Erik D. Demaine}
\author[1]{Yuzhou Gu}
\author[1]{Virginia Vassilevska Williams\thanks{Partially supported by an NSF Career Award, a Sloan Fellowship, NSF Grants CCF-1417238, CCF-1528078 and CCF-1514339, and BSF Grant BSF:2012338.}}
\author[1]{Yinzhan Xu}
\author[1]{Yuancheng Yu}
\affil[1]{MIT \texttt{\{lijieche,edemaine,yuzhougu,virgi,xyzhan,ycyu\}@mit.edu}}
\newcommand{\SAT}{\textsf{SAT}}
\newcommand{\CNFSAT}{\textsf{CNF\ SAT}}
\newcommand{\CircuitSAT}{\textsf{Circuit\ SAT}}
\newcommand{\SIZE}{\textsf{SIZE}}
\newcommand{\SETH}{\textsf{SETH}}
\newcommand{\highlight}[1]{\medskip \noindent \textbf{#1}}
\newenvironment{about}{\begin{list}{}{\leftmargin=2em\partopsep=0pt}\item}{\end{list}}
\begin{document}
	\maketitle
\begin{abstract}
          Since the introduction of retroactive data structures at SODA 2004, a major unsolved problem has been to bound the gap between the best partially retroactive data structure (where changes can be made to the past, but only the present can be queried) and the best fully retroactive data structure (where the past can also be queried) for any problem. It was proved in 2004 that any partially retroactive data structure with operation time $T_{\mathsf{op}}(n,m)$ can be transformed into a fully retroactive data structure with operation time $O(\sqrt{m} \cdot T_{\mathsf{op}}(n,m))$, where $n$ is the size of the data structure and $m$ is the number of operations in the timeline~\cite{demaine2004retroactive}. But it has been open for 14 years whether such a gap is necessary.

          In this paper, we prove nearly matching upper and lower bounds on this gap for all $n$ and $m$.  We improve the upper bound for $n \ll \sqrt m$ by showing a new transformation with multiplicative overhead $n \log m$.  We then prove a lower bound of $\min\{n \log m, \sqrt m\}^{1-o(1)}$ assuming any of the following conjectures:

          \begin{itemize}
              \item \textbf{Conjecture I:} $\CircuitSAT$ requires $2^{n - o(n)}$ time on $n$-input circuits of size $2^{o(n)}$. %That is, given a circuit $C$ with $n$ input bits of size $2^{o(n)}$, decide whether it's satisfiable or not requires $2^{n(1 - o(1))}$.
  % * <edemaine@mit.edu> 2018-02-17T14:15:58.477Z:
  % 
  % > $\textsf{SIZE}(2^{o(n)})$-$\SAT$
  % What is meant by the size here? Size of a clause or size of the whole problem?
  % 
  % ^ <edemaine@mit.edu> 2018-02-17T17:53:30.553Z.

              \begin{about}
              This conjecture is far weaker than the well-believed $\SETH$ conjecture from complexity theory, which asserts that $\CNFSAT$ with $n$ variables and $O(n)$ clauses already requires $2^{n-o(n)}$ time.
              \end{about}

              \item \textbf{Conjecture II:} Online $(\min,+)$ product between an integer $n \times n$ matrix and $n$ vectors requires $n^{3 - o(1)}$ time.

              \begin{about}
              This conjecture is weaker than the $\mathsf{APSP}$ conjectures widely used in fine-grained complexity.
              \end{about}

              \item \textbf{Conjecture III (3-SUM Conjecture):} Given three sets $A,B,C$ of integers, each of size $n$, deciding whether there exist $a \in A, b \in B, c \in C$ such that $a + b + c = 0$ requires $n^{2 - o(1)}$ time.

              \begin{about}
              This 1995 conjecture \cite{Gajentaan-Overmars-1995} was the first conjecture in fine-grained complexity.
              \end{about}
          \end{itemize}

          %We remark here that the first conjecture is far weaker than the well-believed $\SETH$ conjecture, which asserts $\omega(1)$-$\SAT$ with linear number of clauses already requires $2^{n(1-o(1))}$ time. And the second conjecture is weaker than the widely used $\mathsf{APSP}$ conjectures.

          Our lower bound construction illustrates an interesting power of fully retroactive queries: they can be used to quickly solve batched pair evaluation. We believe this technique can prove useful for other data structure lower bounds, especially dynamic ones.
		
	\end{abstract}
	
	\section{Introduction}
	
	\paragraph*{Retroactive Data Structures}
    A data structure can be thought of as a sequence of updates being applied to an initial state. In traditional data structures, we can only append updates to the end of this sequence, called the \emph{timeline}, and can only query about the final state of the data structure resulting from all the updates. \emph{Retroactive data structures}, introduced at SODA 2004~\cite{demaine2004retroactive}, allow us to add or remove updates in the past, i.e., anywhere in the timeline rather than only at the end.
	
	There are two main kinds of retroactive data structures: \emph{partially retroactive} data structures, where we are only allowed to query the present, i.e., the final version resulting from the whole update sequence; and \emph{fully retroactive} data structures, where we are also allowed to query about a past state, i.e., the state resulting from applying only a \emph{prefix} of the update sequence given by the timeline.

    Unlike persistence \cite{Driscoll-Sarnak-Sleator-Tarjan-1989}, there is no general efficient transformation from a data structure into a retroactive data structure, even partially retroactive with sublinear multiplicative overhead \cite{demaine2004retroactive}.
    Nonetheless, several efficient retroactive data structures have been developed
    %for specific problems
    \cite{demaine2007retroactive,Blelloch-2008,Giora-Kaplan-2009,Dickerson-Eppstein-Goodrich-2010,Nekrich-2010,Goodrich-Simons-2011,Parsa-2014,demaine2015polylogarithmic}.

	\paragraph*{Motivation: Full Retroactivity versus Partial Retroactivity}
    A key problem, posed in the original paper on retroactive data structures~\cite{demaine2004retroactive}, is whether the full retroactivity requirement makes problems much harder than their partially retroactive counterpart. The same paper established an $O(\sqrt{m})$ multiplicative overhead transformation from a partially retroactive data structure to a fully retroactive one, where $m$ is the number of updates in the timeline. 
	
		Prior to our work, there was no data structure problem whose best known fully retroactive version was substantially (more than a polylogarithmic factor) worse than the best known partially retroactive version. Priority queues \emph{used to be} the only problem with a polynomial gap (between $O(\sqrt m \log m)$ and $O(\log m)$ time~\cite{demaine2004retroactive}). But at WADS 2015 it was shown that priority queues have a polylogarithmic fully retroactive solution~\cite{demaine2015polylogarithmic}, and more generally, any ``time-fusable'' data structure can be transformed from partial to full retroactivity with polylogarithmic overhead.
       {\em Can this transformation be generalized to all data structures?}
	
	\paragraph*{Our Results: Conditional Lower Bounds}
    We show that, perhaps surprisingly, the $O(\sqrt{m})$ overhead for transforming partial retroactivity into full retroactivity is nearly optimal for general data structure problems, conditioned on any of three well-believed conjectures:
    
\begin{conjecture} \label{CircuitSAT}
    In the Word-RAM model of computation with $O(\log n)$ bit words, it takes $2^{n - o(n)}$ time to solve $\SIZE(2^{o(n)})$ $\CircuitSAT$: decide whether a given $n$-input circuit $C$ of size $2^{o(n)}$ is satisfiable.
\end{conjecture}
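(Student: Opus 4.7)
The statement is a conjecture rather than a theorem, so strictly speaking there is no proof for me to sketch: proving $\SIZE(2^{o(n)})$ $\CircuitSAT$ requires $2^{n-o(n)}$ time unconditionally would yield strong non-uniform circuit lower bounds against subexponential-size circuits, a target well beyond current techniques (natural proofs, algebrization, and relativization barriers all apply). What I would argue instead is that the conjecture is plausible by deriving it from a single well-studied hypothesis, and that is the ``plan'' the paper itself gestures at with the remark that Conjecture I is far weaker than $\SETH$.

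The concrete plan is a one-line reduction from $\SETH$. Recall that $\SETH$ asserts that for every $\epsilon>0$ there exists $k$ such that $k$-$\SAT$ on $n$ variables and $O(n)$ clauses cannot be decided in $2^{(1-\epsilon)n}$ time in the Word-RAM model. Any such $k$-$\CNF$ is computed by a Boolean circuit of size $O(n)$ with $n$ inputs, which trivially lies in $\SIZE(2^{o(n)})$. Hence a hypothetical $2^{(1-\epsilon)n}$-time algorithm for $\SIZE(2^{o(n)})$ $\CircuitSAT$ would, in particular, solve $\CNFSAT$ in time $2^{(1-\epsilon)n}$, contradicting $\SETH$. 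Thus $\SETH \Rightarrow \text{Conjecture I}$. Moreover, the implication does not reverse in any known way: one could consistently believe $\CircuitSAT$ on $\SIZE(2^{o(n)})$ admits faster algorithms while $\CNFSAT$ does not, which is why the paper emphasizes that Conjecture I is strictly weaker.

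The main obstacle to doing more than this reduction is precisely the gap between what we know about CNFs and what we know about general circuits. Via the algorithmic method (improved $\CircuitSAT$ algorithms yielding circuit lower bounds), any unconditional proof of Conjecture I would have to navigate the same barriers that obstruct non-uniform circuit lower bounds of subexponential size. The honest stance, therefore, is that Conjecture I should be read as a hypothesis whose justification is the conditional implication above: we prove that $\SETH$ (or even its relaxations to superpolynomially many clauses, which still yield circuits of size $2^{o(n)}$) suffices to establish it, and we rely on that reduction as the entirety of our evidence.
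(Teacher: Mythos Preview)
Your proposal is correct and mirrors the paper's own treatment: the statement is a conjecture, not a theorem, and the paper offers no proof---only the remark that it is implied by $\SETH$ because any $k$-CNF on $n$ variables is in particular a circuit of size $2^{o(n)}$. Your one-line reduction from $\SETH$ is exactly that justification, so there is nothing to add.
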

	
\begin{rem}
	The problem $\SIZE(2^{o(n)})$ $\CircuitSAT$ is far harder than $\CNFSAT$, and the conjecture above is much weaker than the well-believed Strong Exponential Time Hypothesis (\SETH) \cite{Impagliazzo2001SETH} which states that 
for every $\varepsilon>0$, there is a clause length $k$ such that $k$-SAT on $n$ variables cannot be solved in $2^{(1-\varepsilon)n}$ time. Due to the Sparsification Lemma \cite{Impagliazzo2001SETH}, the formulas that \SETH ~concerns have {\em linear} size.
It is much easier to believe that $\CircuitSAT$ for an unrestricted circuit (as opposed to a formula), of much larger, $2^{o(n)}$ size requires enumeration of all possible inputs.
\end{rem}
		
\begin{conjecture} \label{MinPlus}
	Online $(\min,+)$ product between an integer $n \times n$ matrix and $n$ length-$n$ vectors requires $n^{3 - o(1)}$ time in the word-RAM model of computation with $O(\log n)$ bit words. 
	That is, given an integer matrix $A \in \mathbb{Z}^{n \times n}$, and $n$ vectors $v^1,v^2,\dotsc,v^n$ that are revealed one by one, we wish to compute the $(\min,+)$-products 
	$$
	A \diamond v ~~:=~~ \left(\min_{k=1}^{n} (A_{1,k} + v_k) , ~~ \min_{k=1}^{n} (A_{2,k} + v_k),~~\dotsc,~~\min_{k=1}^{n} (A_{n,k} + v_k)\right)
	$$
	between $A$ and each of the $v^i$s. We get to access $v^{i+1}$ only after we have output $A \diamond v^{i}$.
    The conjecture asserts that the whole computation requires $n^{3 - o(1)}$ time.
\end{conjecture}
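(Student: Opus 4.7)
The statement is a conjecture, an unconditional $n^{3-o(1)}$-time lower bound on Online $(\min,+)$ product in the word-RAM model, not a theorem with a proof appearing later in the paper. No unconditional polynomial-time lower bound of this magnitude is known for any problem in $\PTIME$; the strongest word-RAM lower bounds for such problems (e.g., Larsen's $\Omega((\log n/\log\log n)^2)$ per operation for dynamic problems) are many orders of magnitude too weak. My plan is therefore not to prove Conjecture \ref{MinPlus} outright but to outline the two natural attack strategies and identify where each breaks down.

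The first strategy is a reduction from a problem already widely conjectured to require cubic time. Offline $(\min,+)$-matrix multiplication of $n \times n$ matrices $A$ and $B$ reduces to $n$ Online $(\min,+)$ queries against $A$ by feeding the columns of $B$ as the online vectors $v^1,\dotsc,v^n$. Hence any truly sub-cubic algorithm for the online problem would refute the APSP conjecture, via the Vassilevska Williams--Williams sub-cubic equivalence of APSP, $(\min,+)$-product, and Negative Triangle. This yields a conditional cubic lower bound but not the unconditional statement the conjecture asserts.

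The second strategy is a direct cell-probe lower bound. One would embed $n^2$ independent combinatorial choices into $A$ during preprocessing, choose each $v^i$ adversarially, and argue via an information-transfer or encoding argument that answering the $n$ queries together forces $n^{3-o(1)}$ probes. The main obstacle is that even forcing a single query to take $\omega(\polylog n)$ cell probes would be a major breakthrough in data-structure lower bounds, far past the current $\Omega((\log n/\log\log n)^2)$ ceiling. Because both strategies are obstructed, the paper treats Conjecture \ref{MinPlus} as a hypothesis and only observes that it is weaker than the APSP conjecture, which is standard practice in fine-grained complexity.
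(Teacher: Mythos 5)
You are right that the statement is a conjecture and the paper offers no proof of it: the paper, like you, only argues plausibility by noting that the offline version is subcubic-equivalent to APSP and that the online problem also generalizes the Online Boolean Matrix-Vector conjecture of Henzinger et al., so the hypothesis is weaker than both. Your assessment matches the paper's treatment; the only detail you omit is the explicit comparison to the OMv conjecture, which is inessential.
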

        
\begin{rem}
	The \emph{offline} (and thus easier) version of the above problem is equivalent to calculating the $(\min,+)$-product of two matrices of size $n \times n$, which is known to be asymptotically equivalent to the famous $\mathsf{APSP}$ problem \cite{fischermeyer}: $(\min,+)$-product is in $O(n^c)$ time if and only if $\mathsf{APSP}$ is in $O(n^c)$ time, for any constant $c$.

The online $(\min,+)$-product conjecture is a natural generalization of the online Boolean Matrix-Vector Product conjecture of Henzinger et al. \cite{henzinger2015unifying} that asserts that given a Boolean $n\times n$ matrix, multiplying it with $n$ Boolean vectors given online requires $n^{3-o(1)}$ time, in the Word-RAM model. There is no known relationship between the APSP conjecture and the Online Boolean Matrix-Vector Product conjecture, so one may be true even if the other fails. It is not hard to embed Boolean product into $(\min,+)$-product, and hence our conjecture is a weakening of both of these conjectures simultaneously, making ours very believable.
   %, that is, if either of the these two problems has an $n^{3 - \Omega(1)}$-time algorithm, then so does the other one.
\end{rem}
	
\begin{conjecture}[3-SUM Conjecture] \label{3SUM}
There exists a constant $q$, so that
given three size-$n$ sets $A$, $B$, $C$ of integers in $[-n^{q},n^{q}]$, deciding whether there exist $a \in A$, $b \in B$, $c \in C$ such that $a + b + c = 0$ requires $n^{2 - o(1)}$ time in the word-RAM model with $O(\log n)$ bit words.
\end{conjecture}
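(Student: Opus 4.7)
Since the 3-SUM Conjecture is an open problem in fine-grained complexity rather than a statement I can discharge within this manuscript, any honest plan amounts to sketching an attack strategy together with the barrier that has so far defeated it. The aim would be to transfer an unconditional lower bound from a restricted computational model, where matching bounds are known, into the word-RAM model in which the conjecture is stated.

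The natural starting point is Erickson's $\Omega(n^{2})$ lower bound for 3-SUM in the $3$-linear decision tree model, obtained by counting the connected components of the ``YES'' region viewed as a hyperplane arrangement in $\mathbb{R}^{3n}$. First I would try to simulate any word-RAM algorithm of running time $T(n)$ by an algebraic decision tree of depth $T(n)\cdot\operatorname{polylog}(n)$, so that $T(n)=n^{2-\Omega(1)}$ in the word-RAM would immediately contradict Erickson's bound. As a complementary route, one could try to lift a communication-complexity lower bound for a 3-SUM gadget to a query lower bound via a Raz--McKenzie style simulation theorem, and then argue that the query bound transfers to the word-RAM model with only polylogarithmic loss.

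The decisive obstacle, and the reason neither plan can actually be carried out with present technique, is that the word-RAM model crucially supports \emph{non-algebraic} primitives: bit manipulation, hashing, and above all indirect addressing. Every known simulation of word-RAM algorithms by algebraic or linear decision trees pays polynomial overhead, so the algebraic-model bounds cannot be invoked in a black-box fashion. Consequently, a proof of Conjecture~\ref{3SUM} in the stated model would entail an unconditional super-linear polynomial lower bound against general random-access machines on an explicit problem, a feat that has been out of reach since the $\Omega(n\log n)$ frontier of the 1970s. For this reason Conjecture~\ref{3SUM}, together with Conjectures~\ref{CircuitSAT} and~\ref{MinPlus}, is adopted here as a hardness hypothesis and used only to drive the conditional lower bounds to follow; I do not expect to discharge it, only to use it as the right-hand side of a reduction.
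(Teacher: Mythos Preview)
Your assessment is correct: Conjecture~\ref{3SUM} is stated in the paper as an unproven hardness hypothesis, not as a theorem to be discharged, and the paper offers no proof---only a brief remark on its history and the known bound on~$q$. Your discussion of potential attack strategies and the word-RAM barrier goes well beyond what the paper does, but your bottom line---that the conjecture is to be \emph{assumed} and used as the source side of reductions---matches the paper's treatment exactly.
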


\begin{rem}
    The 3-SUM Conjecture was the first attempt to address fine-grained complexity, back in 1995 \cite{Gajentaan-Overmars-1995}.
    By a standard hashing trick, we can assume $q\le 3+\delta$ for any $\delta>0.3$~\cite{VassilevskaW2018survey}.
    It remains open despite several slightly subquadratic algorithms \cite{Baran-Demaine-Patrascu-2007,chan18,Gronlund-Pettie-2014}.
\end{rem}
		
    We can now state our lower bounds conditioned on the conjectures above, whose proofs are in Section~\ref{LowerBounds}.
	As in our conjectures above, throughout the paper, we assume that we are working in the word-RAM model with word size $w = \Theta(\log \max\{n,m\})$, where $n$ denotes the size of the data structure problem and $m$ denotes the length of the update sequence (timeline).
    %For instance, if the problem is retroactive priority queue, then $n$ is the maximum number of elements in the priority queue at any moment, and $m$ is the number of operations we have. 
 
	\begin{theo}\label{theo:SETH-based}
		There is a data structure problem that has an $O(n^{1+o(1)})$-time partially retroactive data structure, but conditioned on Conjecture~\ref{CircuitSAT}, requires $\Omega(n^{2-o(1)})$ time for fully retroactive queries when $m = \Theta(n^2)$.
	\end{theo}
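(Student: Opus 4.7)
The plan is to exhibit a tailored data structure problem $P$ whose partial-retroactive version is intrinsically cheap --- because the present state admits a small, quickly maintained summary --- but whose fully retroactive queries are forced to interrogate many distinct past states, each of which encodes one pair evaluation of a user-supplied predicate. Concretely, I would let $P$'s updates install pieces of a predicate $C$ and insert ``candidate'' values into a set $S$, and let the query compute some aggregate about $S$ under $C$. For partial retroactivity, we only ever see the final $C$ and the final $S$, so $C$ and a summary of $S$ can be maintained incrementally, yielding the $O(n^{1+o(1)})$-per-operation upper bound.

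For the fully retroactive lower bound, I would reduce from $\CircuitSAT$ on circuits with $n' = 4\log n$ inputs and size $n^{o(1)}$, for which Conjecture~\ref{CircuitSAT} asserts a lower bound of $\Omega(n^{4-o(1)})$. Writing $C(z) = C(x,y)$ with $|x|=|y|=2\log n$, each factor has $n^2$ elements, giving $n^4$ candidate pairs overall. The reduction installs $C$ in $n^{o(1)}$ initial updates, then places each $y \in \{0,1\}^{2\log n}$ at its own time $t_y$ in the timeline, using $m = \Theta(n^2)$ updates in total. For each $x \in \{0,1\}^{2\log n}$ the reduction then issues a fully retroactive query whose timestamp and argument together extract the needed information about $C(x,y)$ at the right prefix, so that over all $n^2$ values of $x$ the answers collectively decide $\CircuitSAT(C)$. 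If every fully retroactive operation cost $T = o(n^{2-\epsilon})$ for some $\epsilon > 0$, then the total work $O(n^2 \cdot T)$ would be $o(n^{4-\epsilon})$, contradicting Conjecture~\ref{CircuitSAT}; hence some operation must cost $\Omega(n^{2-o(1)})$ when $m = \Theta(n^2)$.

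The main obstacle is to choose the semantics of $P$'s query so that simultaneously: (i) the present-state query is genuinely cheap, so partial retroactivity need not pay the $n^{2-o(1)}$ cost; (ii) queries at past states cannot reuse the present-state summary, so fully retroactive algorithms must do fresh work per timestamp; and (iii) updates in the fully retroactive setting remain inexpensive, so the lower bound provably charges to the query. Designing the right aggregate --- one that compresses the present state but whose values at past prefixes effectively enumerate the pair evaluations of $C$ --- is the crux of the construction, and it is what lets the reduction cleanly extract $n^4$ circuit evaluations from only $n^2$ retroactive operations.
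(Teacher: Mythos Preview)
Your high-level strategy---split the circuit's input into two halves and use retroactive queries to enumerate pairs---matches the paper's, but the parameter accounting in your reduction does not close. You place all $n^2$ values of $y$ on the timeline and then issue \emph{one} query per $x$, for $n^2$ queries total. For those $n^2$ queries to jointly decide satisfiability over $n^4$ pairs $(x,y)$, each query must implicitly evaluate $n^2$ pairs. But a query takes a single $x$ as argument and examines the structure at one prefix; if the structure has size $n$, that prefix holds at most $n$ of the $y$'s, so each query covers only $n$ pairs (hence $n^3$ pairs total, not $n^4$). If instead you let $S$ accumulate all $n^2$ values of $y$, then the structure size becomes $N=n^2$, your $m$ becomes $\Theta(N)$ rather than $\Theta(N^2)$, and the per-operation lower bound you obtain is only $n^4/n^2 = N$, which is the \emph{same} as your partially retroactive upper bound $O(N^{1+o(1)})$---so no gap is established.

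The missing idea is \emph{two-sided batching}: the paper stores both halves of the input inside the structure, in two lists $\mathcal{L}_1,\mathcal{L}_2$ of size $\Theta(n)$ each, and defines the query $F^{(\SAT)}$ to scan all $n\times n$ cross pairs and report whether any pair satisfies the circuit. The $y$-half is batched into $\sqrt{n^2}=n$ groups that successively \emph{overwrite} $\mathcal{L}_2$ along the timeline (so querying at time $q_j$ sees group $B_j$), while the $x$-half is batched into $n$ groups that are swapped into $\mathcal{L}_1$ via retroactive updates at the beginning of the timeline. This yields $n\times n=n^2$ (group, time) combinations, each query covering $n^2$ pairs, for $n^4$ pairs in $\Theta(n^2)$ operations and structure size $\Theta(n)$---exactly the arithmetic needed for the $\Omega(n^{2-o(1)})$ bound. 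Your proposal passes individual $x$'s as query arguments rather than batching them into the structure, which is precisely what prevents each query from doing $n^2$ units of work.
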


	\begin{theo}\label{theo:minplus-based}
		There is a data structure problem that has an $O(\log n)$-time partially retroactive data structure, but conditioned on Conjecture~\ref{MinPlus}, requires $\Omega(n^{1 - o(1)})$ time for fully retroactive queries when $m = \Theta(n^2)$.
	\end{theo}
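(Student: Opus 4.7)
The plan is to exhibit a concrete data structure problem $\Pi$ that admits an $O(\log n)$-time partially retroactive implementation, and then reduce online $(\min,+)$ product to fully retroactive operations on $\Pi$. I would let $\Pi$ maintain two length-$n$ arrays $X$ and $Y$ over $\mathbb{Z} \cup \{+\infty\}$, both initialised to $+\infty$, and support three operations: $\mathsf{SetX}(k, x)$ which sets $X[k] \leftarrow x$, $\mathsf{SetY}(k, y)$ which sets $Y[k] \leftarrow y$, and $\mathsf{Min}()$ which returns $\min_{k \in [n]}(X[k] + Y[k])$. The data structure has size $n$ and we will use a timeline of length $\Theta(n^2)$, matching the theorem's hypothesis.

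For partial retroactivity I only need to track the present state, namely the latest $\mathsf{SetX}(k, \cdot)$ and $\mathsf{SetY}(k, \cdot)$ for each $k$. For each $k \in [n]$ I would keep a balanced BST of all $\mathsf{SetX}(k, \cdot)$ operations keyed by their timestamp (and analogously one for $\mathsf{SetY}$); the current $X[k]$ (resp.\ $Y[k]$) is the value of its maximum-timestamp entry. A single global balanced BST over the $n$ current sums $X[k] + Y[k]$ then answers $\mathsf{Min}$ in $O(\log n)$. Retroactively inserting or deleting a $\mathsf{SetX}$ or $\mathsf{SetY}$ updates the relevant per-key BST in $O(\log n)$ and, if that change swaps the maximum-timestamp entry for the affected key, updates one sum in the global BST in another $O(\log n)$. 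Hence every partially retroactive operation runs in $O(\log n)$ time.

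For the reduction, given $A \in \mathbb{Z}^{n \times n}$ and vectors $v^1, \ldots, v^n$ revealed online, I would use a timeline of length $2n^2$ split into two halves. In the preprocessing phase, for every $(j, k) \in [n]^2$ I would place $\mathsf{SetX}(k, A_{j,k})$ at position $n^2 + (j-1)n + k$; then at position $n^2 + jn$, for every $k$ the maximum-timestamp $\mathsf{SetX}(k, \cdot)$ is the one from row $j$, so $X[k] = A_{j,k}$. When the $i$-th vector $v^i$ arrives online, I would retroactively insert $\mathsf{SetY}(k, v^i_k)$ at position $(i-1)n + k$ for each $k \in [n]$. Every query position in the second half lies after all of these $\mathsf{SetY}$ insertions, and for each $k$ the maximum-timestamp $\mathsf{SetY}(k, \cdot)$ seen by such a query is the one from $v^i$, so $Y[k] = v^i_k$. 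A fully retroactive $\mathsf{Min}$ query at position $n^2 + jn$ therefore returns $\min_k(A_{j,k} + v^i_k) = (A \diamond v^i)_j$, and $n$ such queries recover all of $A \diamond v^i$.

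The reduction uses $n^2$ preprocessing updates plus $n$ retroactive $\mathsf{SetY}$ insertions and $n$ retroactive $\mathsf{Min}$ queries per vector, totalling $O(n^2)$ operations. If every fully retroactive operation runs in time $T$, the whole online $(\min,+)$ instance is solved in $O(n^2 \cdot T)$ time, so Conjecture~\ref{MinPlus} forces $T \geq n^{1-o(1)}$. The main obstacle I foresee is verifying the partially retroactive implementation in full generality---in particular, checking that an insertion or deletion in the middle of the timeline correctly detects when the per-key maximum-timestamp entry changes and propagates the at-most-one-sum change to the global min-structure---but this is routine BST bookkeeping rather than a conceptual difficulty.
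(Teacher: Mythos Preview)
Your proposal is correct and follows essentially the same approach as the paper: the data structure problem $\Pi$ coincides with the paper's $\mathcal{P}_{F^{(\min,+)}}$, the partially retroactive implementation is exactly the per-index BST plus global min structure that underlies Lemma~\ref{lm:easy-partial}, and the reduction encodes the rows of $A$ along the timeline and swaps in each $v^i$ at the front before issuing $n$ past-time $\mathsf{Min}$ queries. The only cosmetic difference is that the paper deletes and re-inserts the $n$ front operations for each new $v^i$, whereas you place the $\mathsf{SetY}$ operations for $v^i$ at strictly later (but still pre-$A$) timestamps so that they override earlier vectors without any deletions; both variants yield $\Theta(n^2)$ retroactive operations and the same $n^{1-o(1)}$ lower bound.
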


	\begin{theo}\label{theo:3SUM-based}
		There is a data structure problem that has an $O(\sqrt n)$-time partially retroactive data structure, but conditioned on Conjecture~\ref{3SUM}, requires $\Omega(n^{1-o(1)})$ time for fully retroactive queries when $m = \Theta(n)$.
	\end{theo}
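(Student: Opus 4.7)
The plan is to exhibit a concrete data structure problem $\Pi$ and to prove two complementary bounds: (i) an $O(\sqrt{n})$-time per-operation partially retroactive algorithm for $\Pi$, and (ii) a fine-grained reduction from 3-SUM on $N$ numbers to $m = \Theta(N)$ operations on the fully retroactive version of $\Pi$ instantiated with data structure size $n = \Theta(N)$. The lower bound of $\Omega(n^{1-o(1)})$ per fully retroactive query will then follow: an $o(n^{1-o(1)})$-time fully retroactive algorithm would yield, via the reduction, a $o(N^{2-o(1)})$-time 3-SUM algorithm, contradicting Conjecture~\ref{3SUM}.

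For the partial retroactive upper bound, I would take $\Pi$ to be a labeled-dynamic-set problem supporting insertions and deletions of labeled integers, and pair-existence queries of the form ``is there a labeled pair $(x,y)$ in the current multiset with $x+y$ equal to a prescribed target $T$?''  Since the current state of $\Pi$ is permutation-invariant under the update sequence, partial retroactive operations behave exactly like ordinary online updates.  A standard sqrt-decomposition over the timeline then yields $O(\sqrt{n})$ per operation: the algorithm maintains a ``committed'' snapshot of the multiset summarized by an appropriate auxiliary data structure, rebuilt from scratch in $\widetilde{O}(n)$ time every $\sqrt{m}$ operations, together with a ``pending'' log of the last $O(\sqrt{n})$ retroactive modifications; queries combine both parts in $O(\sqrt{n})$ time, and updates simply append to the pending log.

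For the reduction, the key insight --- the ``batched pair evaluation via fully retroactive queries'' alluded to in the abstract --- is that a fully retroactive query at a past timestamp $t$ transparently gives the state of $\Pi$ restricted to the updates scheduled before $t$, without any extra work.  Given $(A, B, C)$ each of size $N$, the plan is to insert every element of $A$, $B$, and $C$ exactly once with carefully chosen timestamps, arranging them so that $\Theta(N)$ distinguished past timestamps $t_1, t_2, \ldots$ realize, as restricted states of $\Pi$, a family of sub-3-SUM instances whose union of triples covers all of $A \times B \times C$.  We then issue one fully retroactive query at each $t_i$; the disjunction of the answers decides the original 3-SUM problem, with $\Theta(N)$ updates and $\Theta(N)$ queries in total.

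The main obstacle will be fitting the reduction into only $\Theta(N)$ retroactive operations.  If one tried to materialize each sub-instance in the present state via explicit insert/delete cycles, this would cost $\Omega(N^{1.5})$ updates, and a single timeline cannot host all $\Theta(N)$ sub-instance-defining ``windows'' in a pairwise-intersecting fashion (a Helly-type obstruction on the line).  Full retroactivity exactly bypasses this: because we may query any past snapshot, each element of $A \cup B \cup C$ is inserted only once, and the $\Theta(N)$ desired sub-instances appear ``for free'' as past states of the single timeline.  This is precisely the structural feature unavailable to a partially retroactive algorithm, and it is what drives the polynomial gap between the two complexities.
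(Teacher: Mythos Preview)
Your proposal has a fundamental inconsistency in the upper bound. The problem $\Pi$ you define---a dynamic labeled multiset with queries ``is there a pair summing to a given target $T$?''---cannot admit an $O(\sqrt{n})$-per-operation algorithm even in the ordinary, non-retroactive setting without refuting Conjecture~\ref{3SUM} itself: insert all of $A \cup B$, then for each $c \in C$ issue the query with target $-c$; that is $\Theta(N)$ operations and decides 3-SUM in $O(N^{1.5})$ time. Your sketch leaves the ``appropriate auxiliary data structure, rebuilt in $\widetilde{O}(n)$ time'' unspecified precisely because no such structure answering pair-sum queries in $o(n)$ time is known or expected to exist. The paper avoids this by making the problem deliberately \emph{lopsided}: it maintains three lists $\mathcal{L}_1,\mathcal{L}_2,\mathcal{L}_3$, and the query ``is there a triple summing to $0$?'' is \emph{defined} to return $0$ unless $|\mathcal{L}_2|^2,|\mathcal{L}_3|^2 \le |\mathcal{L}_1|$. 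That built-in size constraint is the whole reason the $O(\sqrt{n})$ upper bound holds: one keeps a hash table of all values $-a-b$ with $a \in \mathcal{L}_2$, $b \in \mathcal{L}_3$, which has size $O(n)$, and a single update to one of the small lists touches only $O(\sqrt{n})$ table entries.

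Your lower-bound plan is also off. If every element of $A\cup B\cup C$ is inserted exactly once and never removed, the past states of the timeline form a chain under inclusion; $\Theta(N)$ retroactive queries can then only inspect nested sub-instances, and you cannot realize a nontrivial covering of $A\times B\times C$ this way---the Helly-type obstruction you mention applies just as well to prefixes of an insert-only sequence. The paper's reduction crucially uses retroactive \emph{updates}, not merely retroactive queries. It sets $\mathcal{L}_1 = A$, lays out the $\sqrt{N}$ blocks $C_1,\dots,C_{\sqrt{N}}$ along the timeline (so that at time $q_j$ one has $\mathcal{L}_3 = C_j$), and then, for each $i\in[\sqrt{N}]$, \emph{retroactively overwrites} the earliest $\sqrt{N}$ operations so that $\mathcal{L}_2 = B_i$ throughout, and queries at every $q_j$. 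Each outer iteration costs $\sqrt{N}$ retroactive modifications plus $\sqrt{N}$ retroactive queries, for $\Theta(N)$ operations in total, and the $(i,j)$-th query tests exactly the sub-instance $A\times B_i\times C_j$. The batched-pair-evaluation power of full retroactivity here is two-dimensional---one axis supplied by the timeline, the other by repeatedly rewriting the past---and this second axis is precisely what your ``insert once'' picture lacks.
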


\paragraph*{Our Results: Matching Upper bound}

	The three theorems above show that improving the general dependence on $\sqrt{m}$ is impossible based on any of these three conjectures. But we may hope to have a better data structure when $m \gg n^2$. In fact, we show in Section~\ref{UpperBound} that this is possible, for any ``reasonable'' data structure, by establishing the following theorem:
	
	\begin{theo}\label{theo:DS}
		Suppose a data structure of size $n$ satisfies the following conditions:
		\begin{enumerate}
		\item There is a sequence of $O(n)$ queries to extract the whole state\footnote{The state of a data structure is a description of all data it currently stores.} $\mathcal{S}$ from it.
		\item Given a state $\mathcal{S}$ of size $n$, there is a sequence of $O(n)$ operations to update the data structure from empty initial state to $\mathcal{S}$.
		\item It is partially retroactive with operation time $T_{\mathsf{op}}(n,m)$.
		\end{enumerate}
Then the corresponding problem has an amortized fully retroactive data structure with operation time\\$O\left( \min\{\sqrt{m},n\log m\} \cdot T_{\mathsf{op}}(n,m) \right)$.
        %(Informal)
		%For any reasonable data structure\footnote{Roughly speaking, reasonable means that we can extract all the information for the data structure with $O(n)$ queries, and reconstruct the data structure with $O(n)$ operations. We leaves the formal definition to Section~\ref{UpperBound}.}, suppose it has a partially retroactive operation time $T_{\mathsf{op}}(n,m)$, then it has an amortized fully retroactive operation time 
		%$$O\left( \min\{\sqrt{m},n\log m\} \cdot T_{\mathsf{op}}(n,m) \right).$$
	\end{theo}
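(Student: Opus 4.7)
The plan is to establish the new $O(n \log m \cdot T_{\mathsf{op}}(n,m))$ upper bound; combined with the classical $O(\sqrt{m} \cdot T_{\mathsf{op}})$ transformation of \cite{demaine2004retroactive}, this yields the claimed minimum. The overall approach is a segment tree over the timeline together with an on-demand merging procedure at query time that crucially exploits conditions (1) and (2).

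First I would maintain a balanced binary tree (a segment tree, or a balanced BST if the timeline grows dynamically) over the time coordinates of the operations currently in the timeline. At each internal node $v$ covering a time range $[l_v, r_v]$, store a partially retroactive data structure $D_v$ containing exactly those operations whose times lie in $[l_v, r_v]$. A retroactive insertion or deletion at time $t$ then propagates along the $O(\log m)$ ancestors of the leaf for $t$, costing $O(\log m \cdot T_{\mathsf{op}}(n,m))$ in total, well within the target bound.

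The heart of the argument is answering a retroactive query at a past time $t$. Decompose $[1, t]$ into $k = O(\log m)$ canonical tree intervals $I_1, \ldots, I_k$ in chronological order, with associated structures $D_{I_1}, \ldots, D_{I_k}$, and iteratively compute the prefix states $C_0 = \emptyset$ and $C_j$ = the state after applying all operations in $I_1 \cup \cdots \cup I_j$. To go from $C_{j-1}$ to $C_j$: invoke condition (2) to obtain a sequence of $O(n)$ updates that rebuilds $C_{j-1}$ from the empty state; retroactively insert these updates into $D_{I_j}$ at fresh time labels placed just before $l_{I_j}$; the present state of $D_{I_j}$ is then precisely $C_j$, which by condition (1) can be extracted via $O(n)$ present-state queries; finally undo the temporary insertions so that $D_{I_j}$ is restored for future queries. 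Each such merge costs $O(n \cdot T_{\mathsf{op}})$, so the total query cost is $O(n \log m \cdot T_{\mathsf{op}})$.

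The main subtlety I anticipate is the handling of the ``fresh'' time labels used to install $C_{j-1}$: they must lie strictly before $l_{I_j}$ and must not collide with any existing operation time, and this must hold no matter how crowded the timeline has become. This can be handled cleanly by representing times through a balanced BST of labels so that $O(n)$ new labels can always be spliced between any two existing ones, or equivalently by treating times as elements of a densely ordered universe. The remaining verification is routine: correctness of each merge step follows because $D_{I_j}$ is by invariant a correct partially retroactive structure on exactly the operations in $I_j$, so prepending the setup sequence for $C_{j-1}$ to its timeline yields a structure whose present state is $C_j$. Combining the resulting $O(n \log m \cdot T_{\mathsf{op}})$ bound with the $\sqrt{m}$ transformation of \cite{demaine2004retroactive} (run in parallel and choose the faster outcome depending on $n$ vs.\ $\sqrt m$) yields the amortized bound $O(\min\{\sqrt m, n\log m\} \cdot T_{\mathsf{op}}(n,m))$.
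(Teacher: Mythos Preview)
Your approach is essentially identical to the paper's: a balanced tree over the timeline with a partially retroactive structure at each node, queries answered by decomposing the prefix into $O(\log m)$ canonical intervals and iteratively ``threading'' the state through them using conditions (1) and (2), then combining with the $\sqrt m$ transformation of \cite{demaine2004retroactive} run in parallel.

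The one point you gloss over that the paper handles explicitly is the cost of rebalancing. Your update analysis (``propagates along the $O(\log m)$ ancestors'') is correct only when the tree shape does not change; but inserting or deleting an operation adds or removes a leaf, and a generic balanced BST (AVL, red--black) would then perform rotations that reassign large subtrees, forcing expensive rebuilds of the associated $D_v$'s. The paper fixes this by using a weight-balanced tree (scapegoat-style), whose key property is that the \emph{total} number of element changes over all $S_u$ per leaf insertion/deletion is amortized $O(\log m)$; this is exactly what makes the $O(\log m \cdot T_{\mathsf{op}})$ update bound go through. With that specification added, your argument matches the paper's.
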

    
	\begin{rem}
		The data structure of Theorem~\ref{theo:DS} is similar to the data structure described in \cite[Section~2.2]{demaine2015polylogarithmic}.
	\end{rem}

	Combining the above four theorems, we conclude that under reasonable conditions, the optimal gap between partial and full retroactivity
    is $\Theta(\min\{\sqrt{m},n\})$, up to $m^{o(1)}$ factors, for any $n$ and $m$.
	
	%\subsection*{Intuitions and Overview of the Proofs}
	\paragraph*{Related Work}
	The field of fine-grained complexity studies the exact running time for problems in $\PTIME$ and beyond, and proves many lower bounds for data structure problems conditioned on various conjectures~\cite{Patrascu10,abboud2014popular,henzinger2015unifying,kopelowitz2016higher,abboud2016popular,henzinger2017conditional,Goldstein2017ST}. Look at the recent survey \cite{VassilevskaW2018survey} for a summary of the known results in fine-grained complexity. We mention two of the related papers. 
Building on work by Patrascu~\cite{Patrascu10} who focused on the $3$-SUM conjecture,  Abboud and Vassilevska W.~\cite{abboud2014popular} proved hardness for data structure problems under a variety of hypotheses: $\SETH$, $3$-SUM, APSP etc. \cite{abboud2014popular} introduced $\SETH$ as a hardness hypothesis for data structure problems and 
obtained $\SETH$-hardness for the following dynamic problems: maintaining under edge updates (insertions or deletions) the strongly connected components of a graph, the number of nodes reachable from a fixed source, a $1.3$-approximation of the diameter of the graph, or whether there is $(s,t) \in S \times T$ such that $s$ can reach $t$ for two fixed node sets $S$ and $T$. 
    Henzinger et al.~\cite{henzinger2015unifying} introduces the Online Matrix-Vector Multiplication Conjecture, and shows that it implies tight hardness result for subgraph connectivity, Pagh’s problem, $d$-failure connectivity, decremental single-source shortest paths, and decremental transitive closure.
	
\section{Lower Bounds}
\label{LowerBounds}
	
	In this section, we first give a data structure framework, which eases the construction of our separation, and then we prove Theorem~\ref{theo:SETH-based}, Theorem~\ref{theo:minplus-based}, and Theorem~\ref{theo:3SUM-based}.
	
	\subsection{Data Structure Framework}
	
	We present a data structure framework which turns out to be easy for partially retroactive data structures, but hard for their fully retroactive counterparts. In this framework, a data structure $\mathcal{D}$ maintains several lists, and answers a certain question on them. The formal definition is given below.
	
	\begin{fram}[Data Structure Problem $\mathcal{P}_F$]
		In our data structure problem $\mathcal{P}_F$. We are required to maintain a constant number of lists consisting of items from an entry set $\entry$.
        Denote the lists as $\mathcal{L}_1,\mathcal{L}_2,\dotsc,\mathcal{L}_{k}$, and $F$ is a function defined on these lists. 
		
		We can view each list $\mathcal{L}_i$ as a mapping from $\mathbb{N}$ to $\entry$ and initially every list maps all indices to the idle symbol $\perp$. We use $\mathcal{L}[a]$ to denote the $a$-th element of the list $\mathcal{L}$, and we measure the size of a list $\mathcal{L}$ (denoted by $|\mathcal{L}|$) by the number of $a$'s such that $\mathcal{L}[a] \ne \perp$. The size of the data structure is then measured by sum of the sizes of all its lists.\footnote{A list can also be viewed as a dictionary over integers. We view them as lists because, in our construction, it is much more convenient to do so.}
		
		There are two types of operations.
		
		\begin{itemize}
			\item $\textit{set-element}(\mathcal{L}_i,a,e)$: Set $\mathcal{L}_i[a] = e$.
			
			\item $\textit{$F$-evaluation}$: Evaluate $F$ on the current maintained lists $\mathcal{L}_1,\mathcal{L}_2,\dotsc,\mathcal{L}_k$.
		\end{itemize}
					
	\end{fram}

	The key property for the problem $\mathcal{P}_F$ is that, once we have a data structure $\mathcal{D}_F$ for it, it supports partially retroactive queries with essentially no overhead.
	
	\begin{lemma}\label{lm:easy-partial}
		Suppose there is a data structure $\mathcal{D}_F$ for the data structure problem $\mathcal{P}_F$ with update time $T_U$ and query time $T_Q$. Then there is a partially retroactive data structure $\mathcal{D}_F^{\partver}$ for problem $\mathcal{P}_F$ with update time $T_U + O(\log m)$ and query time $T_Q$.
	\end{lemma}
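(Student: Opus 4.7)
The key observation is that in the partially retroactive setting, queries occur only at the present time, so the only relevant state is the current snapshot of the lists $\mathcal{L}_1, \ldots, \mathcal{L}_k$. For each pair $(i, a)$, the current value $\mathcal{L}_i[a]$ is determined entirely by the \emph{latest} $\textit{set-element}(\mathcal{L}_i, a, \cdot)$ operation present in the timeline (or equals $\perp$ if no such operation exists). Thus the plan is to wrap $\mathcal{D}_F$ with an auxiliary structure that tracks, for each $(i, a)$, which set-element is currently ``winning,'' and to propagate into $\mathcal{D}_F$ only the changes in the winner.

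Concretely, for each $(i, a)$ that appears in at least one currently present set-element, I would maintain a balanced BST keyed by timeline position, storing pairs $(t, e)$. These BSTs are accessed through a dictionary indexed by $(i, a)$, supporting insertion, deletion, and maximum-key lookup in $O(\log m)$ time. To retroactively insert $\textit{set-element}(\mathcal{L}_i, a, e)$ at time $t$, read the current maximum time $t^\star$ in the BST, insert $(t, e)$, and if $t > t^\star$ invoke $\mathcal{D}_F$'s $\textit{set-element}(\mathcal{L}_i, a, e)$ to update the present-time snapshot; otherwise leave $\mathcal{D}_F$ untouched. To delete a set-element at time $t$, remove it from the BST, and if it was the maximum, look up the new maximum $(t', e')$ (using $\perp$ if the BST is now empty) and invoke $\mathcal{D}_F$'s $\textit{set-element}(\mathcal{L}_i, a, e')$. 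Since $F$-evaluation is a query and partial retroactivity restricts queries to the present, each $F$-evaluation is forwarded directly to $\mathcal{D}_F$ in time $T_Q$.

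Each retroactive update therefore performs $O(\log m)$ work on the auxiliary BST and the dictionary, plus at most one invocation of $\mathcal{D}_F$'s set-element at cost $T_U$, yielding total update time $T_U + O(\log m)$, while queries cost $T_Q$. The only slightly delicate point is that deleting the unique latest set-element for some $(i, a)$ requires resetting the underlying value to $\perp$; this is consistent with the framework's convention that every list is initialized to $\perp$, so $\mathcal{D}_F$ must already accept such writes. No other obstacle arises: the lemma is essentially a direct translation of the observation that, for partial retroactivity, the ``winner'' per coordinate $(i,a)$ is all that matters.
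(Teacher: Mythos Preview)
Your proposal is correct and matches the paper's proof essentially line for line: the paper also simulates $\mathcal{D}_F$ on the present snapshot, maintains for each location $(i,a)$ a BST of $\textit{set-element}$ operations ordered by time, and propagates an update to $\mathcal{D}_F$ only when the latest entry in that BST changes (or the BST becomes empty). Your write-up is in fact slightly more detailed than the paper's, but the idea is identical.
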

	\begin{proof}
		Our partially retroactive data structure $\mathcal{D}_F^\partver$ simply simulates an instance of the regular structure $\mathcal{D}_F$ which represents the current version of the data structure. Whenever there is an update in the history, it could be either inserting or deleting an operation $\textit{set-element}(\mathcal{L}_i,a,e)$ at time $t$, it only affects the $a$-th element in $\mathcal{L}_i$ of the current version of the data structure $\mathcal{D}_F$. 
		
		Therefore, we can use a BST to organize all $\textit{set-element}$ operations on each location of each list in the chronological order. We update the corresponding BST on the $a$-th element of list $\mathcal{L}_i$ when dealing with insertion or deletion of an operation $\textit{set-element}(\mathcal{L}_i,a,e)$ in the history. When the latest $\textit{set-element}$ changes in the BST (or the BST becomes empty), we update the corresponding value in $\mathcal{D}_F$. And the query operation is equivalent to the same query operation on the current data structure $\mathcal{D}_F$. The time cost is the usual time cost of BST.
	\end{proof}

	\subsection{\boldmath Lower Bound from $\SIZE(2^{o(n)})$ $\CircuitSAT$}
	
	\newcommand{\WL}{\widetilde{\mathcal{L}}}
	\newcommand{\CL}{\mathcal{L}}
	
	Now we are ready to prove our lower bounds. First we prove Theorem~\ref{theo:SETH-based}, which we repeat here for completeness:
	
\begin{reminder}{Conjecture~\ref{CircuitSAT}}
  In the Word-RAM model with $O(\log n)$ bit words, it takes $2^{n - o(n)}$ time to solve  $\CircuitSAT$ on $n$-input circuits of size $2^{o(n)}$.
\end{reminder}
	
	\begin{reminder}{Theorem~\ref{theo:SETH-based}}
		There is a data structure problem that has an $O(n^{1+o(1)})$-time partially retroactive data structure, but conditioned on Conjecture~\ref{CircuitSAT}, requires $\Omega(n^{2 - o(1)})$ time for fully retroactive queries when $m = \Theta(n^2)$.
	\end{reminder}
	
	\begin{proof}
		Let $d=n^{o(1)}$.
        We use the entry set 
	 	$$
	 	\entry := \mathcal{C}_{d} \times \{0,1\}^{\le d},
	 	$$
	 	where $\mathcal{C}_{d}$ is the set of descriptions of all circuits of size at most $d$, and $\{0,1\}^{\le d}$ is the set of binary strings of length at most $d$.
	 	These descriptions take at most $O(\poly(d)) = n^{o(1)}$ bits. Therefore, an item from $\entry$ consists of $n^{o(1)}$ bits. Denote this number by $d^\prime$. 
		
		Consider the data structure problem $\mathcal{P}_{F^{(\SAT)}}$ with respect to two lists $\mathcal{L}_1,\mathcal{L}_2$ of items in $\entry$ and the function $F^{(\SAT)}$ defined on them as follows. $F^{(\SAT)}(\mathcal{L}_1,\mathcal{L}_2) = 1$ if the following holds:
		
		\begin{quote}\begin{quote}
		There exist $a$ and $b$ with $\mathcal{L}_1[a] = (C_1,x_1) \ne \perp$ and $\mathcal{L}_2[b] = (C_2,x_2) \ne \perp$ such that
			\begin{itemize}
			%\item Let $(C_1,x_1)$ be $\mathcal{L}_1[a]$ and $(C_2,x_2)$ be $\mathcal{L}_2[b]$.
			\item $C_1 = C_2$;
            \item $C_2$ is a valid description of a circuit of size at most $d$ with exactly $|x_1| + |x_2|$ bits of input;
			\item $C_2(x_1,x_2) = 1$.
			\end{itemize}
		\end{quote}\end{quote}
		$F^{(\SAT)}(\mathcal{L}_1,\mathcal{L}_2) = 0$ otherwise. We say a pair $(C_1,x_1)$ and $(C_2,x_2)$ is \emph{good} if they satisfy the conditions above.
		 
		\newcommand{\Nmin}{N_{\mathsf{min}}}
		
		Let $\ell := \left(|\mathcal{L}_1| + |\mathcal{L}_2| \right)$.
        The size of the whole structure is $n = d^\prime \ell$.
		
		\highlight{Partially Retroactive Upper Bound.} In order to maintain $F^{(\SAT)}(\CL_1,\CL_2)$, we keep a counter $n_{\SAT}$ recording the number of pairs $a$ and $b$ such that $\CL_1[a]$ and $\CL_2[b]$ is a good pair. Whenever we modify an element in lists $\CL_1$ or $\CL_2$, it takes $O(n^{1 + o(1)})$ time to update the counter $n_{\SAT}$.
		
		Now, since we have an $O(n^{1 + o(1)})$ update time algorithm for $\mathcal{P}_{F^{(\SAT)}}$, by Lemma~\ref{lm:easy-partial}, it extends to an algorithm for the partially retroactive version.
		
		\highlight{Fully Retroactive Lower Bound.} Given a circuit $C$ of size $2^{o(u)}$ with $u$ inputs. Let $\ell = 2^{u/4}$ be the size of the lists in the data structure (assuming $u$ is divisible by $4$ for simplicity).
		
		Let $A$ and $B$ be two identical lists of entries in $\entry$ with size $2^{u/2} = \ell^2$, such that the $i$-th element of $A$ and $B$ is $(C,w_i)$, where $w_i$ is the $i$-th length $u/2$ binary string in lexicographic order. Then we divide $A$ and $B$ into $\ell = 2^{u/4}$ groups of equal size, and denote them by $A_1,A_2,\dotsc,A_{\ell}$ and $B_1,B_2,\dotsc,B_{\ell}$ correspondingly, where each $A_i$ and each $B_i$ is a list of size $\ell$.
		
		The circuit $C$ is satisfiable if and only if there exists $a \in A$ and $b \in B$ such that $a$ and $b$ is a good pair. Consider the following operation sequences:
		
		\begin{itemize}
			\item First, for each $k \in [\ell]$, we add an operation $\textit{set-element}(\mathcal{L}_1,k,\perp)$. We denote the operation time by $t_{k}$.
			\item Next for each $j \in [\ell]$, we add an operation $\textit{set-element}(\mathcal{L}_{2},k,B_{j}[k])$ for each $k \in [\ell]$. We denote the time right after adding the last operation for each $j$ ($\textit{set-element}(\mathcal{L}_2,\ell,B_{j}[\ell])$) by $q_j$.
			\item Now, for each $i \in \left[\ell\right]$, we replace the operation on time $t_k$ by an operation $\textit{set-element}(\mathcal{L}_1,k,A_{i}[k])$ for each $k \in \left[\ell\right]$, and after that, we make fully retroactive query $F^{(\SAT)}$-$\textit{evaluation}$ at time $q_j$ for each $j \in \left[\ell \right]$. From the definition of $F^{(\SAT)}$, it tells us whether there exists $a \in A_i$, $b \in B_j$ such that $a$ and $b$ is a good pair, for each $i,j \in \left[\ell\right]$.
		\end{itemize}
		
		The whole procedure consists of $m = \Theta(\ell^2)=O(n^2)$ operations. Conditioning on Conjecture~\ref{CircuitSAT}, the whole sequence takes at least $2^{u(1-o(1))} = \ell^{4-o(1)}=n^{4-o(1)}$ time, which means a fully retroactive operation takes at least amortized $\Omega(n^{2-o(1)})$ time, and completes the proof.
	\end{proof}

	\subsection{\boldmath Lower Bounds from Online $(\min,+)$-product}
	
    Next we prove Theorem~\ref{theo:minplus-based}, which we recap here for completeness:
    
\begin{reminder}{Conjecture~\ref{MinPlus}}
	Online $(\min,+)$ product between an integer $n \times n$ matrix and $n$ length-$n$ vectors requires $n^{3 - o(1)}$ time in the word-RAM model with $O(\log n)$ bit words. 
	That is, given an integer matrix $A \in \mathbb{Z}^{n \times n}$, and $n$ vectors $v^1,v^2,\dotsc,v^n$ which are revealed one by one, we wish to compute the $(\min,+)$-product 
	$$
	A \diamond v ~~:=~~ \left(\min_{k=1}^{n} (A_{1,k} + v_k) , ~~ \min_{k=1}^{n} (A_{2,k} + v_k),~~\dotsc,~~\min_{k=1}^{n} (A_{n,k} + v_k)\right)
	$$
	between $A$ and each of the $v^i$s. We get to access $v^{i+1}$ only after we have output $A \diamond v^{i}$.
    The conjecture asserts that the whole computation requires $n^{3 - o(1)}$ time.
\end{reminder}
	
	\begin{reminder}{Theorem~\ref{theo:minplus-based}}
		There is a data structure problem that has an $O(\log n)$-time partially retroactive data structure, but conditioned on Conjecture~\ref{MinPlus}, requires $\Omega(n^{1 - o(1)})$ time for fully retroactive queries when $m = \Theta(n^2)$.
	\end{reminder}

	\begin{proof}
		Let $c$ be a constant such that all entries from $A$ and all $v^i$'s lie in $[0,n^c]$.
		
		Now, consider the data structure problem $\mathcal{P}_{F^{(\min,+)}}$ with respect to two lists $\mathcal{L}_1,\mathcal{L}_2$ and the function $F^{(\min,+)}$ defined on them as
		$$
		F^{(\min,+)}(\mathcal{L}_1,\mathcal{L}_2) := \min_{ a : \mathcal{L}_1[a] \ne \perp, \mathcal{L}_2[a] \ne \perp} \left( \mathcal{L}_1[a] + \mathcal{L}_2[a] \right).
		$$
		The entry set $\entry$ here is the integers in $[0,n^{c}]$.

		\highlight{Partially Retroactive Upper Bound.} Clearly, the operations in $\mathcal{P}_{F^{(\min,+)}}$ can be supported in $O(\polylog(n))$ time: we use a priority queue to maintain the sums $\mathcal{L}_1[a] + \mathcal{L}_2[a]$ for all the valid $a$'s, and update the priority queue correspondingly after each $\textit{set-element}$ operations. Therefore, by Lemma~\ref{lm:easy-partial}, we know the update/query operations in the partially retroactive version of $\mathcal{P}_{F^{(\min,+)}}$ can be supported in $O(\polylog(n) + \log m)$ time.
		
		\highlight{Fully Retroactive Lower Bound.} Let $a_1,a_2,\dotsc,a_n$ be the $n$ rows of $A$, and $v$ be a vector. Computing the $(\min,+)$ product of $A$ and $v$ is equivalent to compute
		$$
		(a_i \diamond v) := \min_{k=1}^{n} \left( a_{i,k} + v_{k} \right)
		$$
		for each $i \in [n]$.
		
		We are going to show that a fully retroactive algorithm for $\mathcal{P}_{F^{(\min,+)}}$ can be utilized to compute $(a_i \diamond v^j)$ for each $i,j \in [n]$ in an online fashion.
		
		Consider the following operation sequences. First we add $\textit{set-element}(\mathcal{L}_1,k,0)$ for each $k \in [n]$; then for each $j \in [n]$, we add $\textit{set-element}(\mathcal{L}_2,k,a_{j,k})$ for each $k \in [n]$. We use $t_{j}$ to denote the time right after adding the operation $\textit{set-element}(\mathcal{L}_2,n,a_{j,n})$, i.e., the time we have just set $\mathcal{L}_2$ to represent vector $a_j$.
		
		Then for each $i \in [n]$, we delete the first $n$ operations in the history (that is, we clear all the $\textit{set-element}$ operations on $\mathcal{L}_1$); and then we add $\textit{set-element}(\mathcal{L}_1,k,v^i_{k})$ for each $k \in [n]$ at the beginning of the operation sequence (that is, we set $\mathcal{L}_1$ to represent the vector $v^i$); next we make a fully retroactive query $\textit{$F^{(\min,+)}$-evaluation}$ at the time $t_{j}$ for each $j \in [n]$.
        It is easy to see that querying at time $t_j$ gives us the value of $(a_j \diamond v^i)$. So, after performing the above procedure for $v^i$, we have calculated the $(\min,+)$ product between $A$ and $v^i$.
		
		The size of data structure is $\Theta(n)$, and there are $m = \Theta(n^2)$ operations in total. Hence, conditioned on Conjecture~\ref{MinPlus}, any fully retroactive data structure running on the above algorithm takes at least amortized $n^{1-o(1)}$ time for either update or query operation.
	\end{proof}

	\subsection{Lower Bounds from 3-SUM}
	
	Next, we prove Theorem~\ref{theo:3SUM-based}, which we recap here for completeness:
	
\begin{reminder}{Conjecture \ref{3SUM} (3-SUM Conjecture)}  There is a constant $q$ such that, given three size-$n$ sets $A$, $B$, $C$ of integers in $[-n^{q},n^{q}]$, deciding whether there exist $a \in A$, $b \in B$, $c \in C$ such that $a + b + c = 0$ requires $n^{2 - o(1)}$ time in the word-RAM model with $O(\log n)$ bit words.
\end{reminder}

	\begin{reminder}{Theorem~\ref{theo:3SUM-based}}
		There is a data structure problem that has an $O(\sqrt n)$-time partially retroactive data structure, but conditioned on Conjecture~\ref{3SUM}, requires $\Omega(n^{1-o(1)})$ time for fully retroactive queries when $m = \Theta(n)$.
	\end{reminder}
    
	\begin{proof}
		Consider the data structure problem $\mathcal{P}_{F^{(\text{3SUM})}}$ with respect to three lists $\mathcal{L}_1,\mathcal{L}_2,\mathcal{L}_3$ and the function $F^{(\text{3SUM})}$ defined on them as follows
		$$
		F^{(\text{3SUM})}(\mathcal{L}_1,\mathcal{L}_2,\mathcal{L}_3) :=
		\begin{cases}
			1 \quad &\text{$|\mathcal{L}_2|^2 \le |\mathcal{L}_1|$, $|\mathcal{L}_3|^2 \le |\mathcal{L}_1|$, and there exist $a,b,c$ such that}\\
					&\text{$\mathcal{L}_1[a] \ne \perp$, $\mathcal{L}_2[b] \ne \perp$, $\mathcal{L}_3[c] \ne \perp$ and $\mathcal{L}_1[a] + \mathcal{L}_2[b] + \mathcal{L}_3[c] = 0$;}\\
			0 \quad &\text{otherwise}.       
		\end{cases}
		$$
		
		Let $n := \sum_{i=1}^{3} |\mathcal{L}_i|$ be size of the whole structure, and $n_i := |\mathcal{L}_i|$.
		
		\highlight{Partially Retroactive Upper Bound.} We use $\WL_2$ (resp.~$\WL_3$) to denote the sublists consisting of the first (at most) $\sqrt{n_1}$ elements of $\mathcal{L}_2$ (resp.~$\mathcal{L}_3$). Then by maintaining a BST for each list, an operation on $\mathcal{L}_2$ (resp.~$\mathcal{L}_3$) can be easily reduced to at most one operation on $\WL_2$ (resp.~$\WL_3$). Since whenever $\WL_2 \ne \mathcal{L}_2$ or $\WL_3 \ne \mathcal{L}_3$, $F(\mathcal{L}_1,\mathcal{L}_2,\mathcal{L}_3)$ is defined to be zero, we can pretend to work with $\WL_2$ and $\WL_3$. 
		
		We build a hash table $\mathcal{H}$ storing all the elements in $\mathcal{L}_1$, and every value of the form $- a - b$ for $a \in \WL_2$, $b \in \WL_3$. Using this table, we can count and maintain the number of the triples $(a,b,c)$ such that $\mathcal{L}_1[a] + \WL_2[b] + \WL_3[c] = 0$. We denote this number by $n_{\text{triple}}$.
		
		Whenever we modify the list $\mathcal{L}_1$, we make the corresponding change on $\mathcal{H}$. This may also cause $O(1)$ additional operations on $\WL_2$ and $\WL_3$, as $n_1$ can be larger or smaller. And when we modify the list $\WL_2$ or $\WL_3$, this causes updating at most $\max(|\WL_2|,|\WL_3| = O(\sqrt{n})$ values in $\mathcal{H}$.
		
		Since we have an $O(\sqrt{n})$ update time algorithm for $\mathcal{P}_{F^{(\text{3SUM})}}$, by Lemma~\ref{lm:easy-partial}, it extends to an algorithm for the partially retroactive version.
		
		\highlight{Fully Retroactive Lower Bound.} 
        Let $A$, $B$, $C$ be three integer lists of size $n$. 
        For convenience we assume that $n$ is a square number. We divide $B$ and $C$ into $\sqrt{n}$ groups of equal size, and denote them by $B_1,B_2,\dotsc,B_{\sqrt{n}}$ and $C_1,C_2,\dotsc,C_{\sqrt{n}}$ correspondingly. Then each $B_i$ and each $C_i$ is a list of size $\sqrt{n}$.
		
		Consider the following operation sequence. 
		
		\begin{itemize}
			\item First, for each $i \in [n]$, we add an operation $\textit{set-element}(\mathcal{L}_1,i,A[i])$, that is, we set the list $\mathcal{L}_1$ to represent the set $A$; then for each $k \in [\sqrt{n}]$, we add an operation $\textit{set-element}(\mathcal{L}_2,k,0)$, whose operation time is denoted by $t_{k}$.
			\item Next for each $j \in [\sqrt{n}]$, we add an operation $\textit{set-element}(\mathcal{L}_{3},k,C_{j}[k])$ for each $k \in [\sqrt{n}]$. We denote the time right after adding the operation $\textit{set-element}(\mathcal{L}_3,\sqrt{n},C_{j}[\sqrt{n}])$ as time $q_j$.
			
			\item Now, for each $i \in \left[\sqrt{n}\right]$, we replace the operation on time $t_k$ by an operation $\textit{set-element}(\mathcal{L}_2,k,B_{i}[k])$. After that, we make a fully retroactive query $F^{\text{3SUM}}$-$\textit{evaluation}$ at time $q_j$ for each $j \in \left[\sqrt{n} \right]$. From the definition of $F^{\textit{3SUM}}$, the queries tell us whether there exists $a \in A$, $b \in B_i$, $c \in C_j$ such that $a + b + c = 0$ for each $i,j \in \left[\sqrt{n}\right]$, and thus solve the 3SUM problem.
		\end{itemize}
		
		The data structure above has size $\Theta(n)$, and the whole procedure consists of $ m = \Theta(n)$ operations. Therefore, conditioned on Conjecture~\ref{3SUM}, either update or query for a fully retroactive data structure for problem $\mathcal{P}_{F^{(\text{3SUM})}}$ takes amortized $\Omega(n^{1-o(1)})$ time.
	\end{proof}
	
	\section{Upper Bounds} \label{UpperBound}
	
	In this section, we prove Theorem~\ref{theo:DS}:
	
	\begin{reminder}{Theorem~\ref{theo:DS}}		
		Suppose a data structure of size $n$ satisfies the following conditions:
		\begin{enumerate}
		\item There is a sequence of $O(n)$ queries to extract the whole state $\mathcal{S}$ from it.
		\item Given a state $\mathcal{S}$ of size $n$, there is a sequence of $O(n)$ operations to update the data structure from empty initial state to $\mathcal{S}$.
		\item It is partially retroactive with operation time $T_{\mathsf{op}}(n,m)$.
		\end{enumerate}
		Then the corresponding problem has an amortized fully retroactive data structure with operation time\\$O\left( \min\{\sqrt{m},n\log m\} \cdot T_{\mathsf{op}}(n,m) \right)$.
	\end{reminder}

	\begin{proof}
		We use a weight-balanced binary tree (WBT) $\mathcal{T}$ to maintain the whole operation sequence \cite{galperin1993scapegoat}. The subtree of each node $u$ corresponds to an interval of operations $S_{u}$ in the whole operation sequence. We can build a partially retroactive data structure $\mathcal{D}_u$ on $S_{u}$ as augmented information in node $u$. One property of WBT is that when we insert or delete its nodes, the amortized total number of element changes to all $S_{u}$ is only $O(\log m)$. More formally, if $S_{u}$ is the set of operations before a node insertion or deletion, and $S_{u}^\prime$ is the set of operations after the insertion or deletion, then WBT ensures 
		$$\sum_{u} |S_{u} \setminus S_{u}^\prime| + |S_{u}^\prime \setminus S_{u}|$$
		is  amortized $O(\log m)$. For each element change in $S_u$, we can update $\mathcal{D}_u$ using the partially retroactive data structure in $O(T_{\mathsf{op}}(n,m) \cdot \log m)$ amortized time per insert/delete of an operation.
		
		For each fully retroactive query, we first extract the corresponding prefix of the operation sequence from the WBT. By properties of WBT, in $O(\log m)$ time, we can get $k = O(\log m)$ nodes, $u_1,u_2,\dotsc,u_k$, such that the concatenation of these $S_{u_i}$'s is exactly the prefix we are asking. Next we maintain a data structure state $\mathcal{S}$ initialized as the empty state. We go through each $u_i$ in order: first append $O(n)$ operations at the beginning of $\mathcal{D}_u$ to set the initial state inside $\mathcal{D}_u$ to be $\mathcal{S}$, and then make $O(n)$ queries on $\mathcal{D}_u$, to extract its final state, and set $\mathcal{S}$ to be that state. By a simple induction, we can see that after we finished processing node $u_i$, the final state of $\mathcal{D}_{u_i}$ corresponds to the state resulting from the concatenation of $S_{u_1},S_{u_2},\dotsc,S_{u_i}$. Therefore, we can then query $\mathcal{D}_{u_k}$ to get the answer we want. Finally, we delete all the operations we added in those $\mathcal{D}_{u}$, so they can be used for the future queries. To summarize, we invoke partially retroactive update/query $O(n \cdot \log m)$ times, and hence the whole query takes $O(n \cdot \log m \cdot T_{\mathsf{op}}(n,m))$ time.
		
		Demaine et al.~\cite{demaine2004retroactive} showed a reduction with $O(\sqrt{m})$ overhead. Roughly, their transformation maintains $\sqrt{m}$ equally distributed checkpoints, and for each checkpoint, they maintain a partially retroactive data structure for the prefix up to that checkpoint. For update, they need to update all the $\sqrt{m}$ partially retroactive data structures; for query of a prefix, they first find the closest checkpoint, adding or deleting operations to this checkpoint in order for it to match the prefix, and then do the query. For both update and query, there are $O(\sqrt m)$ calls to the partially retroactive data structure, hence the $O(\sqrt m)$ overhead. 
		
		Combining these two transformations gives an $O(\min\{\sqrt{m}, n \cdot \log m\})$ overhead. There is a subtle issue here as this requires us to know $n$ and $m$ beforehand. We can avoid that by using the standard technique that maintains two structures $\mathcal{D}_1$ and $\mathcal{D}_2$ simultaneously, one with $\sqrt{m}$ overhead and one with $n \cdot \log m$ overhead. We simulate $\mathcal{D}_1$ and $\mathcal{D}_2$ in an interleaving fashion, and answer the query as soon as one of them gives its answer.
	\end{proof}
    
\section{Discussion}
		Many lower bounds for algorithm problems are based on plausible conjectures from fine-grained complexity theory. Besides the three canonical ones (\SETH, \textsf{APSP}, 3-SUM) mentioned above, some interesting hardness candidates include Boolean Matrix Multiplication~\cite{VassilevskaW10equiv}, Online Matrix Vector Multiplication~\cite{henzinger2015unifying}, and the Triangle Collection problem~\cite{Abboud2015tria}. Their relationship and applications are discussed in detail in~\cite{VassilevskaW2018survey}. 

		Our lower bound constructions reveal that fully retroactive queries facilitate batched pair evaluation. We believe this technique can prove useful for other data structure lower bounds, especially dynamic ones. Some examples include the total update time for partially-dynamic algorithms, worst-case update time, query/update time tradeoffs~\cite{henzinger2015unifying}, and space/time tradeoffs~\cite{Goldstein2017ST}.
		
\section*{Acknowledgment}

We would like to thank Quanquan Liu and Ryan Williams for helpful discussions, and the anonymous reviewers for their generous comments.
	
	\bibliography{team}
	\end{document}